\newtheorem{theorem}{Theorem}
\newtheorem{lemma}[theorem]{Lemma}
\begin{document}
\title{Complexity of Unconstrained $L_2$-$L_p$  Minimization}
\author{Xiaojun Chen\thanks{Department of Applied Mathematics, The Hong Kong Polytechnic
University, Hong Kong, China. E-mail:{\tt
maxjchen@polyu.edu.hk}}\and Dongdong Ge\thanks{Antai School of
Economics and Management, Shanghai Jiao Tong University, Shanghai,
China. E-mail: {\tt ddge@sjtu.edu.cn}} \and Zizhuo
Wang\thanks{Department of Management Science and Engineering,
Stanford University, Stanford, CA 94305. E--mail: {\tt
zzwang@stanford.edu}}\and Yinyu Ye\thanks{Department of Management
Science and Engineering, Stanford University, Stanford, CA 94305;
and Visiting Professor of Department of Applied Mathematics, The
Hong Kong Polytechnic University, Hong Kong, China. E--mail: {\tt
yinyu-ye@stanford.edu}}}
\maketitle
\begin{abstract}
We consider the unconstrained $L_2$-$L_p$ minimization: find a
minimizer of $\|Ax-b\|^2_2+\lambda \|x\|^p_p$ for given $A \in
R^{m\times n}$, $b\in R^m$ and parameters $\lambda>0$, $p\in [0,
1)$. This problem has been studied extensively in variable
selection and sparse least squares fitting for high dimensional
data. Theoretical results show that the minimizers of the
$L_2$-$L_p$ problem have various attractive features due to the
concavity and non-Lipschitzian property of the regularization
function $\|\cdot\|^p_p$. In this paper, we show that the
$L_q$-$L_p$ minimization problem is strongly NP-hard for any $p\in
[0,1)$ and $q\ge 1$, including its smoothed version. On the other
hand, we show that, by choosing parameters $(p,\lambda)$
carefully, a minimizer, global or local, will have certain desired
sparsity. We believe that these results provide new theoretical
insights to the studies and applications of the concave
regularized optimization problems.
\end{abstract}

{\bf Keywords.}  Nonsmooth optimization, nonconvex optimization,
variable selection,

\hspace{0.8in}  sparse solution reconstruction, bridge estimator.

{\bf MSC2010 Classification.} 90C26, 90C51

\section{Introduction}

In this paper, we consider the following $L_2$-$L_p$ minimization
problem:
\begin{equation}\label{pp}
\begin{array}{cc}
\mbox{Minimize}_x  & f_p(x):=\|Ax-b\|_2^2+ \lambda \|x\|_p^p
\end{array}
\end{equation}
where data and parameter $A =(a_1,...,a_n)\in R^{m\times n}, 0\ne b\in
R^m$, $\lambda>0$ and $0\leq p<1$, and variables $x\in R^n$. This
regularized formulation has been studied extensively in variable
selection and sparse least squares fitting for high dimensional
data, see \cite{Char07,Char08,CXY,Fan01,Fou09,Frank93,
Ge,Huang,knight,Lai10,Natarajan95} and references therein. Here,
when $p=0$,
\[\|x\|_0^0=\|x\|_0=|\{i:\ x_i\ne 0\}|\]
that is, the number of nonzero entries in $x$.

The original goal of the model was to find a least squares solution
with fewer nonzero entries for an under-determined linear system
that has more variables than the data measurements. For this
purpose, people considered the regularized $L_2$-$L_0$ problem. For
instance, the variable subset selection method can be viewed as the
$L_2$-$L_0$ problem, which is the most popular method of regression
regularization used in statistics \cite{Frank93}.

However, the $L_0$ regularized problem is difficult to deal with because of the
discrete structure of the $0$-norm, while the solvability of the
$L_2$-$L_p$ problem for $p\in (0,1)$ can be derived from the
continuity and level boundedness of $f_p$. A (global) minimizer of
the $L_2$-$L_p$ problem is also called a bridge estimator in
statistical literature \cite{Frank93} and has various nice
properties including the oracle property \cite{Fan01,Huang,knight}.
Moreover, theoretical results show that in distinguishing zero and
nonzero entries of coefficients in sparse high-dimensional
approximation, the bridge estimators have advantages over the Lasso
estimators that minimize the following convex $L_2$-$L_1$
minimization problem:
\begin{equation}\label{p1}
\begin{array}{cc}
\mbox{Minimize}_x  & f_1(x):=\|Ax-b\|_2^2+ \lambda\|x\|_1.
\end{array}
\end{equation}

Due to these advantages, researchers have been interested in the
$L_p$ regularization problem for $0<p<1$. However, the $L_2$-$L_p$
problem (\ref{pp}) is a nonconvex, non-Lipschitz optimization
problem. There are not many optimization theories on analyzing this type of problems.
Many practical approaches have been developed to tackle the
problem (\ref{pp}), see, e.g., \cite{Char07,Char08,CXY,Huang,Lai10};
but there is no globally convergent algorithm that guarantees to find a
global minimizer or bridge estimator.

To the best of our knowledge, the computational complexity of the
$L_2$-$L_p$ minimization problem remains an open problem. One may
attempt to draw a hardness result from the following problem:
\begin{equation}\label{constpp}
\begin{array}{cc}
\mbox{Minimize}   &  \|x\|^p_p \\
\mbox{Subject to} & Ax=b,
\end{array}
\end{equation}
which is shown in \cite{Ge} to be strongly NP-hard for $p\in [0,1)$;
or the problem
\begin{equation}\label{constppepsilon}
\begin{array}{cc}
\mbox{Minimize}   &  \|x\|_0 \\
\mbox{Subject to} & \|Ax-b\|_2\le \epsilon,
\end{array}
\end{equation}
which is shown in \cite{Natarajan95} to be NP-hard for certain
$\epsilon$. From a complexity theory perspective, an NP-hard
optimization problem with a polynomially bounded objective function
does not admit a polynomial-time algorithm, and a strongly NP-hard
optimization problem with a polynomially bounded objective function
does not even admit a fully-polynomial-time approximation scheme
(FPTAS),  unless P=NP \cite{Vaz}.

Indeed, the $L_2$-$L_p$ problem (\ref{pp}) can be viewed as a
quadratic penalty problem of problem (\ref{constpp}). Intuitively,
solving an unconstrained penalty optimization problem is easier
than solving the constrained optimization problem. Unfortunately,
we show that this is not true. More precisely, we show that
finding a global minimizer of $L_2$-$L_p$ problem (\ref{pp})
remains strongly NP-hard for all $0\le p<1$ and $\lambda>0$,
including its smoothed version. We also extend the strong
NP-hardness result to the $L_q$-$L_p$ minimization problem for
$q\ge 1$.

On the positive side, we present a sufficient condition on the
choice of $\lambda$ for the desired sparsity of all minimizers, global or local, of the
$L_2$-$L_p$ problem for given $(A,b,p)$, as long as their objective value is below that of the
all-zero solution. Under this condition,
any such a local optimal solution of problem (\ref{pp}) is a sparse estimator to the original
problem. This may explain why many methods, e.g., \cite{Char07,Char08,CXY,Huang,Lai10},
have reported encouraging computational results, although what they calculate may not
be a global minimizer.

The remainder of this paper is organized as follows: in Section 2,
we present sufficient conditions on the choice of $\lambda$ to meet
the sparsity requirement of global or local minimizers of the
$L_2$-$L_p$ minimization problem. In general, when $\lambda$ is
sufficiently large with respect to data $(A,b)$ and $p$, the number
of nonzero entries in any minimizer of the problem must be small. In
Section 3, we prove that the $L_q$-$L_p$ minimization problem:
\begin{equation}\label{pp-q}
\begin{array}{cc}
\mbox{Minimize}_x   & f_{q,p}(x):=\|Ax-b\|_q^q+ \lambda\|x\|_p^p
\end{array}
\end{equation}
is strongly NP-hard for any given $0\le p < 1$, $q\ge 1$ and
$\lambda >0$. We then extend our hardness result to its smoothed
version:
\begin{equation}\label{pp-q-e}
\begin{array}{cc}
\mbox{Minimize}_x & f_{q,p,\epsilon}(x):=\|Ax-b\|_q^q+
\lambda\sum^n_{i=1}(|x_i|+\epsilon)^p
\end{array}
\end{equation}
for any given $0< p < 1$, $q\ge 1$, $\lambda>0$ and $\epsilon>0$,
even though the objective function in this case is Lipschitz
continuous. Thus, changing the non-Lipschitz regularization model
(\ref{pp-q}) to a Lipschitz continuous model (\ref{pp-q-e}) gains no
advantage in terms of computational complexity. Finally, we show
that our results are consistent with the existing findings from
statistical literature, but give more specific bounds on choosing
regularization parameters. We also illustrate that for the purpose
of finding a least squares solution with a targeted number of
nonzero entries, finding a local minimizer of problem (\ref{pp}) is
likely to accomplish the same objective as finding a global
minimizer does.

In the rest of the paper, we define $z^0 = 0$ if $z =0$ and $z^0 =
1$ if $z\neq 0$. We use $(x\cdot y)$ to represent the vector
$(x_1y_1, \ldots,x_ny_n)^T\in R^n$ and $\|\cdot\|$ to denote the
$L_2$ norm.

\section{Choosing the parameter $\lambda$ for sparsity}
In applications like variable selection and sparse solution
reconstruction, one wants to find least square estimators with no more than $k$
nonzero entries. On the other hand, one obviously wants to avoid the all-zero solution.
The $L_2$-$L_p$ regularized approach is to first solve $L_2$-$L_p$ problem (\ref{pp})
to find a minimizer. Then, eliminate all variables who have zero values in the minimizer, and solve
the least square problem using only remaining variables.
Thus, the key is to control the support size of minimizers of problem (\ref{pp}) such that it does
not exceed $k$, and this is typically accomplished by selecting a suitable $\lambda$.
We now give a sufficient condition on $\lambda$ for the minimizers of the $L_2$-$L_p$ problem to have
desirable sparsity.

\begin{theorem}\label{Th3.1} Let
\begin{equation}\label{alpha-beta}
\beta(k)
=k^{p/2-1}\left(\frac{2\alpha}{p(1-p)}\right)^{p/2}\|b\|^{2-p},
\quad \alpha =\max_{1\le i\le n} \|a_i\|^2, \quad 1\le k\le n.
\end{equation}
The following statements hold.
\begin{description}
\item{(1)} If $\lambda \ge \beta(k)$, any minimizer $x^*$ of
$L_2$-$L_p$ problem (\ref{pp}) satisfies $\|x^*\|_0< k$ for $k\ge
2$. \item{(2)} If $\lambda \ge \beta(1)$, $x^*=0$ is the unique
minimizer of $L_2$-$L_p$ problem (\ref{pp}). \item{(3)} Suppose
that set $C: =\{ \, x \, | \, Ax=b\, \}$ is non-empty. Then, if
$\lambda \le \frac{\|b\|^2}{\|x_c\|_p^p}$  for some $x_c\in C$,
any minimizer $x^*$ of $L_2$-$L_p$ problem (\ref{pp}) satisfies
$\|x^*\|_0 \ge 1$.
\end{description}
\end{theorem}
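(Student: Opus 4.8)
The three parts share a single mechanism: a minimizer $x^*$ is squeezed between an upper bound on its total $\ell_p$-mass, coming from the comparison $f_p(x^*)\le f_p(0)=\|b\|^2$, and a lower bound on each of its nonzero coordinates, coming from optimality. My plan is to establish the per-coordinate lower bound first, play it against the mass bound to count the support and recover the threshold $\beta$, and then dispatch the endpoint statements (2) and (3) by specialization and by a one-dimensional scaling estimate.

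For the per-coordinate bound I would fix an index $i$ with $x_i^*\neq 0$ and restrict $f_p$ to the line through $x^*$ in the direction of the $i$th coordinate axis $e_i$, i.e. study $g(t)=\|(Ax^*-b)+(t-x_i^*)a_i\|^2+\lambda|t|^p$ plus a constant. Because $x_i^*\neq 0$, $g$ is twice differentiable at $t=x_i^*$, so minimality forces $g'(x_i^*)=0$ and $g''(x_i^*)\ge 0$. The second condition reads $2\|a_i\|^2+\lambda p(p-1)|x_i^*|^{p-2}\ge 0$, which, after using $\|a_i\|^2\le\alpha$, rearranges to $|x_i^*|^{2-p}\ge \lambda p(1-p)/(2\alpha)$; this is the source of the factor $p(1-p)$ in $\beta$.

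Writing $s=\|x^*\|_0$ and summing this bound over the support gives $\|x^*\|_p^p\ge s\,(\lambda p(1-p)/(2\alpha))^{p/(2-p)}$, whereas $f_p(x^*)\le\|b\|^2$ gives $\|x^*\|_p^p\le\|b\|^2/\lambda$. Combining the two inequalities and solving for $\lambda$ produces exactly $\lambda\le\beta(s)$; the point is that the exponents combine so that this rearrangement reproduces the defining formula for $\beta$, with the threshold calibrated so that the support count equals $k$ precisely at $\lambda=\beta(k)$. For (1), $\beta$ is strictly decreasing in its argument, so $\beta(k)\le\lambda\le\beta(s)$ forces $s\le k$; the strict bound $s<k$ for $k\ge 2$ then follows by excluding the boundary case $s=k$, which can occur only when every inequality above is tight, in particular $Ax^*=b$ and $f_p(x^*)=\|b\|^2$, and this is ruled out by the scaling estimate below. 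Part (2) is the specialization $k=1$: the same chain yields $\lambda\le\beta(1)$ for any nonzero minimizer, contradicting $\lambda\ge\beta(1)$ away from the tight boundary, which the scaling estimate again excludes, so $x^*=0$ is the unique minimizer.

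Part (3) I would prove directly by producing a competitor to the origin. Along the ray $t\mapsto tx_c$, where $Ax_c=b$, the hypothesis $\lambda\|x_c\|_p^p\le\|b\|^2$ gives $f_p(tx_c)=(1-t)^2\|b\|^2+\lambda t^p\|x_c\|_p^p\le\|b\|^2\,((1-t)^2+t^p)$. The elementary function $h(t)=(1-t)^2+t^p$ has $h(1)=1$ and $h'(1)=p>0$, so $h(t)<1$ just to the left of $t=1$, whence $f_p(t^*x_c)<\|b\|^2=f_p(0)$ for some $t^*\in(0,1)$; therefore $0$ is not a minimizer and $\|x^*\|_0\ge 1$. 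I would prove this one-dimensional estimate first, since it is also the device that removes the tight boundary in (1) and (2). The step I expect to be most delicate is the bookkeeping that converts the combined inequality into the exact threshold $\beta(k)$ while simultaneously forcing the strict inequality $s<k$, together with the separate treatment of the degenerate case $p=0$, where the optimality derivation collapses and the support bound must instead be read off directly from $\lambda\|x^*\|_0\le f_p(x^*)\le\|b\|^2$.
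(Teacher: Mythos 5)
Your proposal is correct (for $0<p<1$, which is the only regime in which the paper's own proof operates) and rests on the same counting mechanism as the paper: a per-coordinate lower bound on the nonzero entries of a minimizer played against the budget $f_p(x^*)\le f_p(0)=\|b\|^2$, with the exponents recombining to give exactly $\lambda\le\beta(\|x^*\|_0)$. The differences are in how the ingredients are obtained and how strictness is forced. The paper does not re-derive the coordinate bound: it cites the lower bound theory of Chen--Xu--Ye \cite{CXY} (their Theorems 2.1 and 2.3), which is precisely the second-order argument you give via $g''(x_i^*)\ge 0$ along coordinate lines; your version makes the proof self-contained. More substantively, the paper obtains the strict inequality $\|x^*\|_0<k$ without any boundary-case analysis: from the first-order condition $2B^T(Bx^*_T-b)+p\lambda\left(|x^*_T|^{p-2}\cdot x^*_T\right)=0$ it deduces $Ax^*\neq b$ for every nonzero minimizer (otherwise the gradient condition would force $x^*_T=0$), hence $f_p(x^*)>\lambda\|x^*\|_p^p$ strictly, and the contradiction $f_p(x^*)>\|b\|^2$ is immediate. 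You instead keep all inequalities weak and exclude the tight case $s=k$ (which forces $Ax^*=b$ and $\lambda\|x^*\|_p^p=\|b\|^2$) by the scaling estimate $f_p(tx^*)=\|b\|^2\left((1-t)^2+t^p\right)<\|b\|^2$ for $t$ slightly below $1$; this is valid, and it has the virtue of reusing the same device for part (3), where the paper instead invokes the cited (not proved) fact that feasible points of $Ax=b$ are never stationary points of (\ref{pp}). Your ray argument for part (3) is thus more elementary and self-contained than the paper's.

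One caveat, which affects the paper more than you: at $p=0$ both the first- and second-order machinery degenerate, and your fallback $\lambda\|x^*\|_0\le\|b\|^2$ yields only the non-strict bound $s\le k$. This gap is not repairable, because the strict conclusion is actually false at $p=0$ on the boundary: for $A=I_2$, $b=(1/\sqrt{2},1/\sqrt{2})^T$, $\lambda=\beta(2)=1/2$, the point $x^*=b$ is a minimizer of $f_0$ with $\|x^*\|_0=2=k$. The paper silently ignores $p=0$ (its citations to \cite{CXY} require $p>0$), so your explicit flagging of this case is a point in your favor, not a defect relative to the paper.
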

\begin{proof}
Suppose that $x^*\neq 0$ is a global  minimizer of the $L_2$-$L_p$
problem (\ref{pp}). Let  $B=A_T \in R^{m\times |T|},$ where
$T=$support$(x^*)$ and $|T|= \|x^*\|_0$ is the cardinality of the
set $T$. By Theorem 2.1 and Theorem 2.3 in \cite{CXY}, the columns
of $ B$ are linearly independent and $x^*$ must satisfy
\begin{equation}\label{sec3-1}
2B^T(Bx^*_T-b)+p\lambda (|x^*_T|^{p-2}\cdot(x^*_T))=0.
\end{equation}
This implies $Ax^*-b=Bx^*_T-b\neq 0$. Hence we have
\begin{eqnarray}\label{sec3-2}
f_p(x^*)=\|Ax^*-b\|^2+\lambda \|x^*\|^p_p> \lambda \sum_{i\in T}
|x^*_i|^p \ge  \lambda |T| \left(\frac{\lambda
p(1-p)}{2\alpha}\right)^{p/(2-p)},
\end{eqnarray}
where the last inequality is from the lower bound theory for local
minimizers of (\ref{pp}) in \cite[Theorem 2.1]{CXY}.

(1) Suppose that $\lambda \ge \beta(k).$ If $x^*$ is a nonzero
minimizer of (\ref{pp}) with $\|x^*\|_0\ge k\ge1$, then from
(\ref{sec3-2}) and the definition of $\beta(k)$ in
(\ref{alpha-beta}), we have
$$f_p(x^*)>k\lambda^{2/(2-p)}\left(\frac{p(1-p)}{2\alpha}\right)^{p/(2-p)} \ge
\|b\|^2= f_p(0).$$ This contradicts to that $x^*$ is a minimizer of
(\ref{pp}). Hence $\|x^*\|_0 < k.$

(2) Suppose $\lambda \ge \beta(1)$. If $x^*$ is a nonzero  minimizer of (\ref{pp}), then
there is $i$ such that
 $x^*_i\neq 0$ and
$$f_p(x^*)=\|Ax^*-b\|+\lambda \|x^*\|^p_p> \lambda |x^*_i|^p
\ge  \lambda \left(\frac{\lambda p(1-p)}{2\alpha}\right)^{p/(2-p)}
\ge \|b\|^2=f(0).$$ This contradicts to that $x^*$ is a minimizer of
(\ref{pp}). Hence, $x=0$ is the unique solution of (\ref{pp}).

(3) Note that $f_p(0)=\|b\|^2$ and $f_p(x_c)=\lambda \|x_c\|_p^p$
for $x_c\in C$. Therefore, if
\begin{equation}\label{nonzero}
\lambda \le \frac{\|b\|^2}{\|x_c\|_p^p} \quad \mbox{for some} \quad
x_c\in C
\end{equation}
then $f_p(0)\ge f_p(x_c)$.  Since $x_c$ is not a stationary point of
$L_2$-$L_p$ problem \cite{CXY}, there is $\tilde{x}$ near $x_c$ such
that $f_p(x_c)>f_p(\tilde{x}).$ Hence $x=0$ cannot be a global
minimizer of (\ref{pp}).
\end{proof}

\noindent
 {\bf Remark 1}\,  It was known that $x=0$ is a local minimizer of  the $L_2$-$L_p$
problem (\ref{pp})  for any value of $\lambda >0$ \cite{CXY}, and
$x=0$ is a global minimizer of (\ref{pp}) for a ``sufficiently
large'' $\lambda$ \cite{Huang}. Theorem \ref{Th3.1}, for the first
time, establishes a specific bound $ \beta(1)$, such that $x=0$ is
the unique global minimizer of (\ref{pp}) for $\lambda \ge
\beta(1)$. An important algorithmic implication of Theorem
\ref{Th3.1} is that, for given data $(A,b)$ and $p$, choosing
$\lambda\ge \beta(k)$ for a small constant $k$ does not help to
solve the original sparse least squares problem. For a small
constant $k$, say from $1$ to $3$, one might be better off to
enumerate all combinations of solutions, each with no more than $k$
nonzero entries, to find a minimizer. This can be done in a strongly
polynomial time of
the problem dimensions.\\

One may be also interested in the relation of $\lambda$ and the
support sizes of local minimizers of $L_2$-$L_p$ problem
(\ref{pp}).  We present the following result for the sparsity of certain
local minimizers of (\ref{pp}).
\begin{theorem}\label{Th2}  Let
\begin{equation}\label{gamma}
\gamma(k) =k^{p-1}\left(\frac{2\|A\|}{p}\right)^p\|b\|^{2-p}.
\end{equation}
If $\lambda \ge \gamma(k)$, then any local minimizer $x^*$ of
problem (\ref{pp}), with $f_p(x^*)\le f_p(0)=\|b\|^2$, satisfies
$\|x^*\|_0< k$ for $k\ge 2$.
\end{theorem}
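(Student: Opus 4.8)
The plan is to argue by contradiction: suppose $x^*$ is a nonzero local minimizer of (\ref{pp}) with $f_p(x^*)\le\|b\|^2$ and support $T=\text{support}(x^*)$ of size $|T|=\|x^*\|_0\ge k$, and derive $f_p(x^*)>\|b\|^2$. As in the proof of Theorem \ref{Th3.1}, the first-order condition (\ref{sec3-1}) holds at any local minimizer, and it forces the residual $r:=Ax^*-b=Bx^*_T-b$ to be nonzero (otherwise the regularization term in (\ref{sec3-1}) would vanish, forcing $x^*_i=0$ for every $i\in T$). Reading (\ref{sec3-1}) componentwise, for each $i\in T$ one obtains $p\lambda|x^*_i|^{p-1}=2|a_i^Tr|$.

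The key departure from Theorem \ref{Th3.1} is that, instead of invoking the second-order lower-bound theory of \cite{CXY}, I would exploit the hypothesis $f_p(x^*)\le\|b\|^2$ directly to control the residual. Since $\|r\|^2\le f_p(x^*)\le\|b\|^2$, we have $\|r\|\le\|b\|$; and since $a_i^Tr$ is the $i$-th entry of $A^Tr$, the spectral-norm estimate gives $|a_i^Tr|\le\|A^Tr\|\le\|A\|\,\|r\|\le\|A\|\,\|b\|$. Substituting this upper bound into the componentwise identity yields the per-entry lower bound
\[
|x^*_i|=\left(\frac{p\lambda}{2|a_i^Tr|}\right)^{1/(1-p)}\ge\left(\frac{p\lambda}{2\|A\|\,\|b\|}\right)^{1/(1-p)},\qquad i\in T.
\]
Summing the $p$-th powers over $i\in T$ and using $|T|\ge k$ gives
\[
\|x^*\|_p^p\ge k\left(\frac{p\lambda}{2\|A\|\,\|b\|}\right)^{p/(1-p)}.
\]

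To finish, I would combine these estimates. Because $\|r\|>0$,
\[
f_p(x^*)>\lambda\|x^*\|_p^p\ge \lambda k\left(\frac{p\lambda}{2\|A\|\,\|b\|}\right)^{p/(1-p)},
\]
and since $\lambda\cdot\lambda^{p/(1-p)}=\lambda^{1/(1-p)}$, the right-hand side is an increasing function of $\lambda$. A direct computation shows it equals exactly $\|b\|^2$ when $\lambda=\gamma(k)$ (using $(p-1)/(1-p)=-1$ and $(2-2p)/(1-p)=2$ to collapse the exponents). Hence for $\lambda\ge\gamma(k)$ we get $f_p(x^*)>\|b\|^2=f_p(0)$, contradicting the hypothesis; therefore $\|x^*\|_0<k$. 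The step to watch is obtaining the per-entry lower bound from the first-order condition alone: here the hypothesis $f_p(x^*)\le\|b\|^2$ plays the role that the second-order lower-bound theory played in Theorem \ref{Th3.1}, and it is the operator-norm estimate $|a_i^Tr|\le\|A\|\,\|b\|$ that produces the factor $\|A\|$ and the exponent $p$ appearing in $\gamma(k)$.
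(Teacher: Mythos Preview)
Your proof is correct and follows essentially the same approach as the paper's own proof. The only difference is cosmetic: the paper obtains the per-entry lower bound $|x^*_i|\ge\bigl(\tfrac{\lambda p}{2\|A\|\,\|b\|}\bigr)^{1/(1-p)}$ by citing \cite[Theorem~2.3]{CXY} as a black box, whereas you derive that same bound directly from the first-order condition (\ref{sec3-1}) together with the level-set hypothesis $\|r\|\le\|b\|$ and the operator-norm estimate $|a_i^Tr|\le\|A\|\,\|r\|$; the remainder of the contradiction argument is identical to the paper's.
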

\begin{proof}
Note that (\ref{sec3-1}) holds for any local minimizer of
$L_2$-$L_p$ problem (\ref{pp}). By Theorem 2.3 in \cite{CXY}, for
any local minimizer $x^*$ of $L_2$-$L_p$ problem (\ref{pp}) in the
level set $\{x:\ f_p(x)\le f_p(0)\}$, we have
\begin{eqnarray}\label{sec3-4}
f_p(x^*)=\|Ax^*-b\|^2+\lambda \|x^*\|^p_p> \lambda \sum_{i\in T}
|x^*_i|^p \ge  \lambda |T| \left(\frac{\lambda
p}{2\|A\|\|b\|}\right)^{p/(1-p)},
\end{eqnarray}
where $T=$support$(x^*).$  If $|T|=\|x^*\|_0\ge k\ge 1$, then
$$f_p(x^*)>\lambda k\left(\frac{\lambda
p}{2\|A\|\|b\|}\right)^{p/(1-p)}=\lambda^{1/(1-p)}k\left(\frac{p}{2\|A\|}\right)^{p/(1-p)}\|b\|^{p/(p-1)}
\ge \|b\|^2=f_p(0),$$ which is a contradiction.
\end{proof}

Theorem \ref{Th3.1} concerns global minimizers of $L_2$-$L_p$
problem (\ref{pp}) while Theorem \ref{Th2} concerns its local
minimizers in the level set $\{x:\ f_p(x)\le f_p(0)\}$.  Since $x=0$ is a trivial
local minimizer for problem (\ref{pp}), we believe any good method would likely find
a minimizer that at least is better than $x=0$.
Below, we use an example to illustrate the bounds presented in Theorems \ref{Th3.1}  and \ref{Th2}.

\noindent{\bf Example 2.1} Consider the following $L_2$-$L_{1/2}$
minimization problem
\begin{equation}\label{e2.1}
\mbox{Minimize} \quad
f(x):=(x_1+x_2-1)^2+\lambda(\sqrt{|x_1|}+\sqrt{|x_2|}).
\end{equation}
From $A=(1, 1)$, $b=1$ and $x_c=(1,0)$,   we easily find these data
in Theorem \ref{Th3.1} and Theorem \ref{Th2},
$$\alpha=1, \quad \|b\|=1, \quad \beta(k)=8^{1/4}k^{-3/4}, \quad
\frac{\|b\|^2}{\|x_c\|_p^p}=1, \quad \gamma(k)=32^{1/4} k^{-1/2}.$$
For $k=2$, we have $\beta(2)=1$. Using parts 1 and 3 of Theorem
\ref{Th3.1}, we can claim that any minimizer $ x^*$ of (\ref{e2.1})
with $\lambda=1$ satisfies $\|x^*\|_0=1.$ Using part 2 of  Theorem
\ref{Th3.1}, we can claim that $x=0$ is the unique minimizer of
(\ref{e2.1}) with $\lambda\ge \beta(1)=8^{1/4}.$ The lower bound
$\beta(1)$ can be improved further. In fact, we can give a number
$\beta^*\le \beta(1)$ such that $x=0$ is the unique minimizer of
(\ref{e2.1}) with $\lambda \ge \beta^*$ by using the first and
second order necessary conditions \cite{CXY} for (\ref{pp}).

For $\lambda=\frac{8}{3\sqrt{3}} < 8^{1/4}$, it is easy to see that
$(x_1,x_2)=(1/3,0)$ and $(x_1,x_2)= (0,1/3)$ are two vectors
satisfying
\begin{eqnarray*}
2x_1(x_1+x_2-1)+\frac{\lambda}{2}\sqrt{|x_1|}=0, \quad \quad
2x_2(x_1+x_2-1)+\frac{\lambda}{2}\sqrt{|x_2|}=0,
\end{eqnarray*}
 and
 $$H(x)=2\left(\begin{array}{cc}
 x_1^2 & x_1x_2\\
 x_1x_2 & x_2^2
\end{array}
\right)
 -\frac{\lambda}{4}
\left(\begin{array}{cc}
 \sqrt{|x_1|} & 0\\
 0 & \sqrt{|x_2|}
\end{array}
\right)=0.$$ However, since the third order derivative of
$g(t):=f((1/3+t)e_1)$ (or $g(t):= f((1/3+t)e_2)$) is strictly
positive on both side of $t=0$, $(x_1,x_2)=(1/3,0)$ and $(x_1,x_2)=
(0,1/3)$ are not local minimizers. Moreover, these two vectors are
the only nonzero vectors satisfying both first and second order
necessary conditions. We can claim that $x=0$ is the unique global
minimizer of (\ref{e2.1}).\\

Our theorems reinforce the findings from statistical
literature that global minimizers of the $L_2$-$L_p$
regularization problem may have many advantages over those from other
convex regularization problems, and the new results actually give precise bounds
on how to choose $\lambda$ for desirable sparsity. The remaining question:
is the $L_2$-$L_p$ regularization problem (\ref{pp}) tractable for given
$\lambda>0$ and $0\le p<1$? Or more specifically, is there an efficient or
polynomial-time algorithm to find a global minimizer of problem (\ref{pp})?
Unfortunately, we prove a strong negative result in the next section.

\section{The $L_2$-$L_p$ problem is strongly NP-hard}
As we mentioned earlier, one may attempt to draw a negative result
directly from constrained $L_p$ problem (\ref{constpp}) or
(\ref{constppepsilon}). However, it is well known that the
quadratic penalty function is not exact because its minimizer is
generally not the same as the solution of the corresponding
constrained optimization; see, e.g., \cite{Nocedal}. For example,
the all-zero vector is a local minimizer of the $L_2$-$L_p$
problem (\ref{pp}), but it may not even be feasible for the $L_p$
problem (\ref{constpp}). On the other hand, the set of all basic
feasible solutions of (\ref{constpp}) is exactly the set of its
local minimizer \cite{Ge}, but such a local minimizer of
(\ref{constpp}) may not even be a stationary point of problem
(\ref{pp}). In fact, there is no $\lambda >0$ such that $\bar{x}$,
any feasible solution of problem (\ref{constpp}), satisfies the
first order necessary condition of $L_2$-$L_p$ problem (\ref{pp}).

Another difference between (\ref{constpp}) and (\ref{pp}) is the
following: it has been shown in \cite{Ge} that any solution is a
local minimizer of (\ref{constpp}) as long as it satisfies the first
and second order necessary optimality conditions of (\ref{constpp}).
However, Example 2.1 shows that this fact is not true for
$L_2$-$L_p$ problem (\ref{pp}).

Thus, we need somewhat new proofs for the hardness result. To
facilitate the new proof, we first prove that problem (\ref{pp-q})
is NP-hard, and then extend to the strongly NP-hard result.
\begin{theorem}\label{thm:pp-np}
Minimization problem (\ref{pp-q}) is NP-hard for any given $0\le p
<1$, $q\ge 1$ and $\lambda>0$.
\end{theorem}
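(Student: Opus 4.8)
The plan is to reduce a known NP-hard problem to the $L_q$-$L_p$ minimization problem (\ref{pp-q}). The natural candidate is the partition problem or, more directly, one of the constrained sparse-recovery problems already cited, namely the NP-hard problem of minimizing $\|x\|_0$ subject to $\|Ax-b\|\le\epsilon$ (\ref{constppepsilon}) or the exact-feasibility version. However, as the authors themselves stress in the opening of Section 3, the quadratic penalty is \emph{not} exact, so one cannot simply quote the hardness of (\ref{constpp}) or (\ref{constppepsilon}) and transfer it. I therefore need a self-contained reduction whose structure forces the penalized objective to behave like the constrained one.

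Let me think about what I want the reduction to accomplish. Consider an instance of a decision problem — I would try the partition problem: given positive integers $c_1,\dots,c_n$, decide whether there is a subset summing to half the total. I would encode this by building a matrix $A$ and vector $b$ so that the data-fidelity term $\|Ax-b\|_q^q$ vanishes exactly on the feasible solutions of the combinatorial instance, and so that on those solutions the regularization term $\lambda\|x\|_p^p$ takes a value that distinguishes a ``yes'' instance from a ``no'' instance. The key design principle is to make the feasible set discrete (for instance, forcing each $x_i\in\{0,1\}$ through suitable linear constraints folded into the least-squares term), so that at an optimum $\|x\|_p^p$ counts something combinatorial. Because $p<1$ makes $\|\cdot\|_p^p$ concave and rewards sparsity, and because I can pick $\lambda>0$ arbitrarily (the theorem must hold for \emph{every} fixed $\lambda$), I would rescale the encoding integers and the constraint gadget so that $\lambda$ becomes irrelevant to the combinatorial comparison — i.e. the gap between yes- and no-instances in the objective survives regardless of the fixed $\lambda$.

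The steps, in order, would be: first, fix the source problem and state its NP-hardness precisely; second, construct $(A,b)$ from an instance in polynomial time, using a gadget that (a) pins the support of any minimizer to a $0/1$-type structure and (b) makes the least-squares residual zero precisely at combinatorially meaningful points; third, compute the optimal objective value $f_{q,p}(x^*)$ as a closed-form expression in the combinatorial data, separating the ``yes'' and ``no'' cases; fourth, verify that the threshold separating the two cases is computable and that the reduction is polynomial in the input size, so that an oracle for (\ref{pp-q}) decides the source problem. I would lean on the earlier structural facts from \cite{CXY} quoted in the proof of Theorem \ref{Th3.1} — linear independence of the active columns at a minimizer and the first-order condition (\ref{sec3-1}) — to control where minimizers can sit and to rule out spurious fractional optima.

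\textbf{The hard part} will be step two: engineering the gadget so that minimizers of the \emph{unconstrained penalized} objective are forced onto the combinatorial feasible set despite the non-exactness of the penalty. In particular, I must prevent the minimizer from ``cheating'' by accepting a small positive residual in $\|Ax-b\|_q^q$ in exchange for a cheaper $\|x\|_p^p$; the concavity and non-Lipschitz behavior of $\|\cdot\|_p^p$ near the axes — the very feature exploited in the sparsity theorems above — makes this balance delicate, since tiny coordinates are cheap to introduce. I expect to handle this by scaling the data so that any deviation from the intended integral solution incurs a residual cost that strictly dominates any possible saving in the regularizer, thereby restoring an effective exactness for the particular family of instances I build. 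Getting this domination to hold uniformly for every fixed $p\in[0,1)$, every $q\ge 1$, and every $\lambda>0$ simultaneously is where the technical care concentrates.
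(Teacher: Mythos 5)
There is a genuine gap, and it sits exactly where you predicted: your step two. Your mechanism is to scale the data so that any deviation from the intended integral ($0/1$-type) solution incurs a residual cost that \emph{strictly dominates} the saving in the regularizer, thereby restoring ``effective exactness.'' For $q>1$ (in particular the headline case $q=2$) this cannot be done by any finite scaling: if the residual vanishes at a point with $x_i=1$, then perturbing $x_i\to 1-\delta$ saves roughly $\lambda p\,\delta$ in the penalty (the marginal saving of the concave term is linear, indeed its derivative is bounded away from zero at $x_i=1$), while the residual cost grows only like $M^q\delta^q=o(\delta)$ as $\delta\to 0$, no matter how large the scale factor $M$ is. So every minimizer strictly prefers to shrink below the integral point: minimizers of the penalized problem are \emph{never} on your combinatorial set, the support-pinning step collapses, and with it the claim that $\|x\|_p^p$ ``counts something combinatorial'' at the optimum. (A second, smaller defect: linear constraints folded into the fidelity term can never carve out a two-point set like $x_i\in\{0,1\}$; you need a nonconvex gadget for that.) One could retreat to a quantitative value-gap argument --- show that for $M$ large but of polynomial bit-size the optimal value of the penalized problem is within half of the yes/no gap of the constrained problem (\ref{constpp}) --- but that requires perturbation estimates you have not supplied, and it is precisely the non-Lipschitz behavior of $\|\cdot\|_p^p$ that makes them delicate.

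The paper's proof sidesteps all of this by \emph{embracing} the fractional minimizers instead of fighting them. It first normalizes $\lambda=\frac12$ by the change of variables $\tilde{x}=(2\lambda)^{1/p}x$, $\tilde{A}=(2\lambda)^{-1/p}A$ (cleaner than your rescaling of the encoding integers). It then reduces from partition using a \emph{pairing} gadget (\ref{p0r}): each item gets two variables $x_j,y_j$ tied by the fidelity term $|x_j+y_j-1|^q$, and the one-dimensional Lemma \ref{lem:min-gx} shows that $g(z)=|1-z|^q+\frac12|z|^p$ has a unique minimizer $z^*(p,q)\in(0,1]$ with value $c(p,q)<\frac12$. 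Dropping the coupling term $|a^T(x-y)|^q\ge 0$, the objective separates into $n$ copies of this one-dimensional problem, and the inequality $|x_j|^p+|y_j|^p\ge|x_j+y_j|^p$ forces exactly one of each pair to be zero and the other to equal $z^*(p,q)$ at any point achieving the bound $nc(p,q)$; the optimal value equals $nc(p,q)$ if and only if additionally $a^T(x-y)=0$, i.e., if and only if an equitable partition exists. So the combinatorial information is carried by \emph{which} of $x_j,y_j$ is nonzero --- not by integrality of the values --- and no exactness of the penalty is ever needed. Note also that the paper's argument is self-contained and elementary; the structural facts from \cite{CXY} (linear independence of active columns, the first-order condition) that you planned to lean on play no role in its hardness proof.
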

We first prove a useful technical lemma.

\begin{lemma}\label{lem:min-gx}
Consider the problem
\begin{eqnarray}\label{technicallemma}
\mbox{Minimize}_{z\in R} \quad g(z):=|1-z|^q+\frac{1}{2} |z|^p
\end{eqnarray}
for some given $0\le p < 1$ and $q\geq 1$. It is minimized at a
unique point (denoted by $z^*(p, q)$) on $(0,1]$. And the optimal
value $c(p ,q)$ is less than $\frac{1}{2}$.
\end{lemma}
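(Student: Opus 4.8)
The plan is to treat $g$ as a one–dimensional function and locate its minimizer by elementary calculus, using coercivity for existence and a sign analysis of $g'$ for uniqueness. First I would observe that $g$ is continuous on $R$ and coercive, since $|1-z|^q\to\infty$ as $|z|\to\infty$; hence a global minimizer exists. To confine it to $(0,1]$, I would bound $g$ off this interval: for $z\le 0$ we have $|1-z|^q\ge 1$ and $\frac12|z|^p\ge 0$, so $g(z)\ge 1$; for $z>1$ we have $|z|^p>1$, so $g(z)>\frac12$; and $g(1)=\frac12$. Thus every candidate minimizer lies in $(0,1]$, and it suffices to analyze $g$ on this interval, where $g(z)=(1-z)^q+\frac12 z^p$.

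The key quantitative claim is $c(p,q)<\frac12$, which I would obtain by exhibiting a point in $(0,1)$ where $g$ dips below $g(1)=\frac12$. Differentiating on $(0,1)$ gives $g'(z)=-q(1-z)^{q-1}+\frac{p}{2}z^{p-1}$, whose left limit at $z=1$ equals $\frac{p}{2}>0$ when $q>1$. Hence $g$ is strictly increasing just to the left of $1$, so $g(z)<g(1)=\frac12$ for $z$ slightly below $1$, forcing $c(p,q)<\frac12$. Combined with the boundary estimates above, the minimizer must in fact lie in the open interval $(0,1)$.

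For uniqueness I would factor $g'(z)=\frac{p}{2}z^{p-1}\bigl(1-\frac{2q}{p}h(z)\bigr)$ with $h(z)=(1-z)^{q-1}z^{1-p}$, so that the sign of $g'$ is governed by $h$. Since $\log h(z)=(q-1)\log(1-z)+(1-p)\log z$ is strictly concave on $(0,1)$, the function $h$ is strictly unimodal with $h(0^+)=h(1^-)=0$; therefore $h(z)=\frac{p}{2q}$ has at most two roots. Because $g(0)=1>\frac12=g(1)$, $g$ is not nondecreasing, which rules out the zero- and one-root cases and leaves exactly two roots, giving $g'$ the sign pattern $+,-,+$. Thus $g$ increases, then decreases, then increases on $(0,1)$, producing a single interior local minimum $z^*(p,q)$, which---being below $\frac12$ while both endpoints are at least $\frac12$---is the unique global minimizer.

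The main obstacle is the strict inequality $c(p,q)<\frac12$ at the two boundary regimes allowed by the hypotheses. When $q=1$ the left derivative at $z=1$ becomes $-1+\frac{p}{2}<0$, so $g$ is decreasing into $z=1$ and $g>\frac12$ throughout $(0,1)$; when $p=0$ one has $g(z)=(1-z)^q+\frac12>\frac12$ on $(0,1)$. In both cases the infimum $\frac12$ is attained only at $z=1$, so the interior-minimum argument yields $c(p,q)=\frac12$ rather than a strict inequality, with the minimizer at the right endpoint. These degenerate cases are exactly where I expect the delicate point to lie, and I would treat them separately---either relaxing the conclusion to $c(p,q)\le\frac12$ there, or confining the strict bound and interior minimizer to the regime $p\in(0,1),\ q>1$, where the analysis above applies cleanly.
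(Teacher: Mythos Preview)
Your argument is essentially the paper's: confine the minimizer to $[0,1]$ by comparing with $g(0)=1$ and $g(1)=\tfrac12$, compute $g'$ on $(0,1)$, and show the stationary equation $z^{1-p}(1-z)^{q-1}=\tfrac{p}{2q}$ has at most two solutions via unimodality of $h(z)=z^{1-p}(1-z)^{q-1}$ (the paper differentiates $h$ directly rather than invoking log-concavity, but this is the same idea). Your caution about the boundary regimes is well placed and in fact sharper than the paper's treatment: the paper handles $p=0$ separately and obtains $c(0,q)=\tfrac12$ (so the strict inequality in the lemma statement already fails there), and its assertion $g'(1)=\tfrac{p}{2}>0$ tacitly uses $q>1$, leaving the case $q=1$---where, as you note, the minimum is again $\tfrac12$ at $z=1$---unaddressed.
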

\begin{proof}
First it is easy to see that when $p=0$, $g(z)$ has a unique
minimizer at $ z = 1$, and the optimal value is $\frac{1}{2}$. Now
we consider the case when $p\neq 0$. Note that $g(z)>g(0)=1$ for all
$z<0$, and $g(z)>g(1)=\frac{1}{2}$ for all $z>1$. Therefore the
minimum point must lie within $[0,1]$.

To optimize $g(z)$ on $[0,1]$, we check its first derivative
\begin{eqnarray}\label{derivativeg}
g'(z)=-q(1-z)^{q-1}+\frac{pz^{p-1}}{2}.
\end{eqnarray}
We have $g'(0^+)=+\infty$ and $g'(1)=\frac{p}{2}>0$. Therefore, if
function $g(z)$ has at most two stationary points in (0,1), the
first one must be a local maximum and the second one must be the
unique global minimum and the minimum value
 $c(p,q)$ must be less than $\frac{1}{2}$.

Now we check the possible stationary points of $g(z)$. Consider
solving $g'(z)=-q(1-z)^{q-1}+\frac{p z^{p-1}}{2}=0$. We get
$z^{1-p}(1-z)^{q-1}=\frac{p}{2q}$.

Define $h(z)=z^{1-p}(1-z)^{q-1}$. We have
\[h'(z)=h(z)(\frac{1-p}{z}-\frac{q-1}{1-z}).\] Note that
$\frac{1-p}{z}-\frac{q-1}{1-z}$ is decreasing in $z$ and must have a
root on $(0,1)$. Therefore, there exists a point $\bar{z}\in(0,1)$
such that $h'(z)>0$ for $z<\bar{z}$ and $h'(z)<0$ for $z>\bar{z}$.
This implies that $h(z)=\frac{p}{2q}$ can have at most two solutions
in $(0,1)$, i.e., $g(z)$ can have at most two stationary points. By
the previous discussions, the lemma holds.
\end{proof}

{\noindent\bf {Proof of Theorem \ref{thm:pp-np}.}} First we claim
that without loss of generality we only need to consider the
problem with $\lambda = \frac{1}{2}$. This is because given any
problem of form (\ref{pp-q}), we can make the following
transformation:
$$ \tilde{x} = (2\lambda)^{1/p}x \mbox{ , } \tilde{A} = (2\lambda)^{-1/p} A \mbox{ and } \tilde{b}=b$$
and scale this problem to:
\begin{eqnarray}\label{half}
\mbox{Minimize}_{\tilde{x}}\quad \|\tilde{A}\tilde{x} -
\tilde{b}\|_q^q + \frac{1}{2}\|\tilde{x}\|_p^p.
\end{eqnarray} Note that this transformation
is invertible, i.e., for any given $\lambda_0$, one can transform an
instance with $\lambda= \lambda_0$ to one with $\lambda =
\frac{1}{2}$ and vice versa. Therefore, we only need to consider the
case when $\lambda = \frac{1}{2}$.

Now we present a polynomial time reduction from the well known
NP-complete {\em partition problem}~\cite{Garey} to problem
(\ref{half}). The partition problem can be described as follows:
given a set $S$ of rational numbers $\{a_1, a_2, \ldots, a_n\}$, is
there a way to partition $S$ into two disjoint subsets $S_1$ and
$S_2$ such that the sum of the numbers in $S_1$ equals to the sum of
the numbers in $S_2$?

Given an instance of the partition problem with
$a=(a_1,a_2,\ldots,a_n)^T\in R^n$. We consider the following
minimization problem in form (\ref{half}):
\begin{equation}\label{p0r}
\begin{array}{cc}
\mbox{Minimize}_{x,y} & {\displaystyle
P(x,y)=|a^T(x-y)|^q+\sum_{1\leq j \leq n}|x_j+y_j-1|^q+
\frac{1}{2}\sum_{1\leq j \leq n}(|x_j|^p+|y_j|^p) }.
\end{array}
\end{equation}
We have
\begin{eqnarray*}
\mbox{Minimize}_{x,y} P(x,y)& \geq & \mbox{Minimize}_{x_j,y_j}
\sum_{1\leq j \leq n}|x_j+y_j-1|^q+
\frac{1}{2}\sum_{1\leq j \leq n}(|x_j|^p+|y_j|^p)\\
& = & \sum_{1\leq j \leq n} \mbox{Minimize}_{x_j,y_j} \
|x_j+y_j-1|^q + \frac{1}{2}
(|x_j|^p+|y_j|^p)\\
& = & n\cdot \mbox{Minimize}_{z}\ |1-z|^q+\frac{1}{2} |z|^p,
\end{eqnarray*}
where the last equality is from the fact that $|x_j|^p + |y_j|^p
\ge |x_j + y_j|^p$ and that we can always choose one of them to be
$0$ such that the equality holds.

By applying Lemma \ref{lem:min-gx}, we have
$$P(x,y) \ge nc(p,q).$$
Now we claim that there exists an equitable partition to the
partition problem if and only if the optimal value of (\ref{half})
equals to $nc(p,q)$. First, if $S$ can be evenly partitioned into
two sets $S_1$ and $S_2$, then we define $(x_i=z^*(p,q), y_i=0)$ if
$a_i$ belongs to $S_1$ and define $(x_i=0, y_i=z^*(p,q))$ otherwise.
These $(x_j,y_j)$ provide an optimal solution to $P(x,y)$ with
optimal value $nc(p,q)$. On the other hand, if the optimal value of
(\ref{pp-q}) is $nc(p,q)$, then in the optimal solution, for each
$i$, we must have either $(x_i=z^*(p,q), y_i=0)$ or $(x_i=0,
y_i=z^*(p,q))$. And we must also have $a^T(x-y)=0$, which implies
that there exists an equitable partition to set $S$. Thus Theorem
\ref{thm:pp-np} is proved. $\Box$

In the following, using the similar idea, we prove a stronger result:
\begin{theorem}\label{thm:pp-strongNP}
Minimization problem (\ref{pp-q}) is strongly NP-hard for any given
$0\le p <1$, $q\ge 1$ and $\lambda>0$.
\end{theorem}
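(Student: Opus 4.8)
The plan is to mimic the reduction behind Theorem \ref{thm:pp-np}, but to replace the (weakly NP-complete) partition problem by the strongly NP-complete \emph{3-partition problem}: given $3m$ positive integers $a_1,\dots,a_{3m}$, each lying strictly between $B/4$ and $B/2$ and summing to $mB$, decide whether they can be split into $m$ groups each summing to $B$. Because 3-partition remains NP-complete even when the integers $a_i$ and the bound $B$ are bounded by a polynomial in $m$, any polynomial-time reduction that produces an $L_q$-$L_p$ instance whose data $(A,b)$ is polynomially bounded in $m$ will establish strong NP-hardness.

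As in Theorem \ref{thm:pp-np}, I would fix $\lambda=\tfrac12$ (the scaling argument there applies verbatim) and introduce one variable $x_{ij}$ for each element $i\in\{1,\dots,3m\}$ and group $j\in\{1,\dots,m\}$, read as ``element $i$ is placed in group $j$.'' The objective would be
\begin{equation*}
P(x)=\sum_{j=1}^{m-1}\Bigl|\sum_{i=1}^{3m}a_ix_{ij}-\sum_{i=1}^{3m}a_ix_{i,j+1}\Bigr|^q+\sum_{i=1}^{3m}\Bigl|\sum_{j=1}^{m}x_{ij}-1\Bigr|^q+\frac{1}{2}\sum_{i=1}^{3m}\sum_{j=1}^{m}|x_{ij}|^p,
\end{equation*}
which is of the form $\|Ax-b\|_q^q+\tfrac12\|x\|_p^p$ with $A$ having entries in $\{0,\pm1,\pm a_i\}$ and $b$ having entries in $\{0,1\}$; all data are therefore polynomially bounded. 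The first block of terms enforces that the group-weights $\sum_i a_ix_{ij}$ are all equal (a homogeneous, target-zero constraint, which avoids introducing the possibly irrational value $z^*(p,q)$ into the data), while the second block together with the regularizer reproduces, separately for each element $i$, the single-variable problem of Lemma \ref{lem:min-gx}.

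The heart of the argument is the equivalence ``the optimal value of $P$ equals $3m\,c(p,q)$ if and only if the 3-partition instance is solvable.'' Dropping the nonnegative balance terms and applying the subadditivity $\sum_j|x_{ij}|^p\ge|\sum_j x_{ij}|^p$ gives $P(x)\ge 3m\,c(p,q)$, the value $c(p,q)$ being contributed by each element block via Lemma \ref{lem:min-gx}. For the forward direction, the uniqueness of the minimizer in Lemma \ref{lem:min-gx} (together with the strictness of subadditivity for $0<p<1$, and the $p=0$ case handled directly) forces any optimal $x$ to have exactly one nonzero coordinate $x_{ij}=z^*(p,q)$ per element; the vanishing balance terms then force every group-weight to equal the common value $z^*(p,q)B$, so the elements assigned to each group sum to $B$, a valid 3-partition. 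The converse is immediate by setting $x_{ij}=z^*(p,q)$ exactly when element $i$ lies in group $j$.

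I expect the main obstacle to be the \emph{robustness}, rather than the exact-value, form of this equivalence, which is what strong NP-hardness and the ``no FPTAS'' consequence actually require: I must show that on a NO instance the optimal value exceeds $3m\,c(p,q)$ by at least a fixed positive constant (depending only on the fixed $p,q$, hence inverse-polynomial in the input size). The delicate point is ruling out near-optimal solutions that trade a small increase across several element blocks against a small decrease in the balance terms. I would dichotomize: either a solution retains the clean $\{0,z^*(p,q)\}$ structure, in which case each nonzero balance term equals $z^*(p,q)^q$ times the $q$-th power of a positive integer (by integrality of the $a_i$) and is thus bounded below; or it deviates from that structure, in which case the strict uniqueness in Lemma \ref{lem:min-gx} makes at least one element block exceed $c(p,q)$ by a fixed amount. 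Combining the two cases yields the constant gap, and since the optimal objective value is at most $3m/2$ (polynomially bounded), the general principle cited in the introduction upgrades the reduction to strong NP-hardness and precludes an FPTAS unless P$=$NP.
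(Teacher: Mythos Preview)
Your proposal follows the paper's route: reduce from 3-partition, introduce one variable $x_{ij}$ per (element, group) pair, use the assignment block $|\sum_j x_{ij}-1|^q$ together with the regularizer and Lemma~\ref{lem:min-gx} to force a clean $\{0,z^*(p,q)\}$ structure, and let a balance block detect whether the induced grouping is a valid 3-partition. The one substantive difference is the balance block itself. The paper writes $\sum_{j=1}^{m}\bigl|\sum_i a_i x_{ij}-B\bigr|^q$ with target $B$; you write the $m-1$ homogeneous differences $\bigl|\sum_i a_i x_{ij}-\sum_i a_i x_{i,j+1}\bigr|^q$. Your choice is not cosmetic: at the intended optimizer the nonzero entries equal $z^*(p,q)$, so the $j$th weighted group sum is $z^*(p,q)\,B$, not $B$; for $0<p<1$ one has $z^*(p,q)<1$, and the paper's balance terms are then strictly positive even on a YES instance, so the equivalence ``optimal value $=n\,c(p,q)$ iff the 3-partition is solvable'' does not go through as written. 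Your homogeneous version lets $z^*(p,q)$ factor out, and together with $\sum_i a_i=mB$ it forces every group to sum to $B$. In short, you are executing the paper's idea with a repaired balance block, and your remark about keeping $z^*(p,q)$ out of the data correctly identifies why the obvious patch (replacing $B$ by $z^*(p,q)B$) is unattractive.

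Your closing discussion of a quantitative gap on NO instances goes beyond the paper, which stops at the exact equivalence. It is the right concern if one wants a rational decision threshold, and hence a fully clean reduction, or the no-FPTAS corollary mentioned in the introduction. The dichotomy you sketch is sound in spirit but not yet a proof: a near-optimal $x$ need not have the exact $\{0,z^*(p,q)\}$ pattern, so neither branch applies literally. To close it you would insert a continuity step---if every element block is within $\delta$ of $c(p,q)$, then the strict uniqueness in Lemma~\ref{lem:min-gx} forces each row within some $\eta(\delta)$ of a clean assignment, whence on a NO instance at least one balance difference is bounded below by roughly $z^*(p,q)-O(m\,\eta(\delta))$---after which the two cases merge into a single uniform lower bound depending only on $p$ and $q$.
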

\begin{proof}

We present a polynomial time reduction from the well known strongly
NP-hard 3-partition problem~\cite{Garey1,Garey}. The 3-partition
problem can be described as follows: given a multiset $S$ of $n=3m$
integers $\{a_1, a_2, \ldots, a_n\}$ with sum $mB$, can $S$ be
partitioned into $m$ subsets, such that the sum of the numbers in
each subset is equal?

We consider the following minimization problem in the form
(\ref{half}):
\begin{equation}\label{ppr}
\mbox{Minimize} \quad {\displaystyle P(x)=\sum_{j=1}^m|\sum_{i=1}^n
a_{i}x_{ij}-B|^q + \sum_{i=1}^n |\sum_{j=1}^m x_{ij}-1|^q }+
\frac{1}{2} \sum_{i=1}^n\sum_{j=1}^m |x_{ij}|^p.
\end{equation}

The remaining argument will be the same as the proof for Theorem
\ref{thm:pp-np}.
\end{proof}

Theorem \ref{thm:pp-strongNP} implies that the $L_2$-$L_p$
minimization problem is strongly NP-hard. Next we generalize the
NP-hardness result to the smoothed version of this problem in
(\ref{pp-q-e}).

\begin{theorem}\label{thm:pp-q-np} Minimization
problem (\ref{pp-q-e}) is strongly NP-hard for any give $0< p<1$,
$q\ge 1$, $\lambda>0$ and $\epsilon>0$.
\end{theorem}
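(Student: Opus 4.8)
The plan is to repeat the reduction behind Theorems~\ref{thm:pp-np} and~\ref{thm:pp-strongNP}, but with the regularizer $|x_i|^p$ replaced by the smoothed term $(|x_i|+\epsilon)^p$, the one genuinely new ingredient being a smoothed analogue of Lemma~\ref{lem:min-gx}. First I would normalize to $\lambda=\tfrac12$: the substitution $\tilde x=(2\lambda)^{1/p}x$, $\tilde A=(2\lambda)^{-1/p}A$, $\tilde b=b$ from the proof of Theorem~\ref{thm:pp-np} still applies, the only change being that the offset rescales to $\tilde\epsilon=(2\lambda)^{1/p}\epsilon>0$; since the claim is asserted for \emph{every} $\epsilon>0$, treating $\lambda=\tfrac12$ loses no generality. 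Here one must keep $p>0$, as $(|x_i|+\epsilon)^0\equiv1$ would make the regularizer constant.

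The heart of the matter is the following smoothed lemma: for $0<p<1$, $q\ge1$, $\epsilon>0$, the function $g_\epsilon(z)=|1-z|^q+\tfrac12(|z|+\epsilon)^p$ has a unique global minimizer $z^*_\epsilon$, which is positive. As in Lemma~\ref{lem:min-gx}, $g_\epsilon(z)>g_\epsilon(0)$ for $z<0$ and $g_\epsilon(z)>g_\epsilon(1)$ for $z>1$, so the minimum lies in $[0,1]$; and $g_\epsilon'(z)=0$ on $(0,1)$ rearranges to $h(z):=(1-z)^{q-1}(z+\epsilon)^{1-p}=\tfrac{p}{2q}$. Since $h'(z)/h(z)=\tfrac{1-p}{z+\epsilon}-\tfrac{q-1}{1-z}$ is strictly decreasing, $h$ is unimodal and $g_\epsilon$ has at most two stationary points in $(0,1)$.

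The delicate point, which I expect to be the main obstacle, is that unlike in Lemma~\ref{lem:min-gx} the derivative $g_\epsilon'(0^+)=-q+\tfrac{p}{2}\epsilon^{p-1}$ is now finite, so for small $\epsilon$ the origin is itself a local minimum and the old ``$g'(0^+)=+\infty$'' argument no longer forces the minimizer into the open interval. I would rule out this spurious competitor using subadditivity of $t\mapsto t^p$: from $(1+\epsilon)^p\le 1+\epsilon^p$ we get $g_\epsilon(1)=\tfrac12(1+\epsilon)^p<1+\tfrac12\epsilon^p=g_\epsilon(0)$, so the boundary value at $0$ is never optimal. Combined with the at-most-two-stationary-points bound, this pins down a unique global minimizer $z^*_\epsilon\in(0,1]$ with $g_\epsilon(z^*_\epsilon)<g_\epsilon(0)$; verifying uniqueness amounts to a short case analysis of the sign pattern of $g_\epsilon'$ at the two endpoints (note $g_\epsilon'(1^-)>0$ whenever $q>1$, and $g_\epsilon'$ is monotone when $q=1$).

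With the lemma in hand, I would run the $3$-partition reduction of Theorem~\ref{thm:pp-strongNP} on the objective $\sum_{j}\bigl|\sum_i a_ix_{ij}-z^*_\epsilon B\bigr|^q+\sum_i\bigl|\sum_j x_{ij}-1\bigr|^q+\tfrac12\sum_{i,j}(|x_{ij}|+\epsilon)^p$, where the column target is scaled by the on-value $z^*_\epsilon$. Dropping the column terms and grouping by rows, concavity of $t\mapsto(t+\epsilon)^p$ shows that for a fixed row sum the regularizer is minimized by concentrating all mass on one coordinate, which adds a fixed $\tfrac{m-1}{2}\epsilon^p$ per row; each row problem then collapses to minimizing $g_\epsilon$, yielding $P(x)\ge n\,g_\epsilon(z^*_\epsilon)+\tfrac{n(m-1)}{2}\epsilon^p$. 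By uniqueness of $z^*_\epsilon$, equality forces exactly one variable equal to $z^*_\epsilon$ per row and all column terms to vanish, i.e.\ $\sum_{i\in S_j}a_i=B$ for each block, which is precisely a valid $3$-partition; the converse assignment is immediate. As $p,q,\epsilon$ are fixed parameters, all constructed data are polynomially bounded, so the reduction certifies strong NP-hardness, and since $g_\epsilon$, hence $f_{q,p,\epsilon}$, is Lipschitz, the smoothing buys no complexity advantage.
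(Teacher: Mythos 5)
Your proposal follows the paper's strategy step for step: normalize to $\lambda=\tfrac12$ by the same substitution, reduce from $3$-partition with the same row/column objective, collapse each row to the scalar problem $\min_z g_\epsilon(z)$ via concavity of $t\mapsto(t+\epsilon)^p$, and invoke a smoothed analogue of Lemma~\ref{lem:min-gx}. Where you differ is in execution, and both differences are genuine improvements over the paper's sketch. First, the paper disposes of the smoothed lemma with ``similar to Lemma~\ref{lem:min-gx}, one can prove that $g_\epsilon$ has a unique minimizer,'' but the old argument leans on $g'(0^+)=+\infty$, which fails after smoothing: as you observe, $g_\epsilon'(0^+)=-q+\tfrac{p}{2}\epsilon^{p-1}$ is finite (and positive for small $\epsilon$), so $z=0$ becomes a local minimizer that must be excluded by a separate argument; your subadditivity comparison $g_\epsilon(1)=\tfrac12(1+\epsilon)^p\le\tfrac12+\tfrac12\epsilon^p<1+\tfrac12\epsilon^p=g_\epsilon(0)$ supplies exactly the missing step and also yields the positivity of $z^*_\epsilon$ that the equivalence needs. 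Second, the paper keeps the column target $B$, but then the configuration that attains the row-wise lower bound (one entry equal to $z^*_\epsilon$ per row) produces column sums $z^*_\epsilon B\neq B$ whenever $z^*_\epsilon\neq 1$, so the paper's claimed equivalence ``$3$-partition solvable iff $\min P_\epsilon=nc(p,q,\epsilon)$'' is not literally correct; your rescaled target $z^*_\epsilon B$ repairs this (the same defect sits, unrepaired, in the proof of Theorem~\ref{thm:pp-strongNP}). One caveat: your fix places $z^*_\epsilon$, which is only implicitly defined and in general irrational, into the data of the constructed instance, so a fully rigorous reduction would still need a precision/perturbation argument showing that a sufficiently accurate rational approximation preserves the equivalence; the paper has the same unaddressed issue in its irrational threshold $nc(p,q,\epsilon)$, so you are no worse off, but neither argument is complete on this point.
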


\begin{proof}
We again consider the same $3$-partition problem, we claim that it
can be reduced to a minimization problem in form (\ref{pp-q-e}).
Again, it suffices to only consider the case when $\lambda =
\frac{1}{2}$ (Here we consider the hardness result for any given
$\epsilon> 0$. Note that after the scaling, $\epsilon$ may have
changed). Consider:
\begin{equation}\label{pp-q-e-eq}
\mbox{Minimize}_x \quad {\displaystyle
P_{\epsilon}(x)=\sum_{j=1}^m|\sum_{i=1}^n a_{i}x_{ij}-B|^q +
\sum_{i=1}^n |\sum_{j=1}^m x_{ij}-1|^q }+ \frac{1}{2}
\sum_{i=1}^n\sum_{j=1}^m (|x_{ij}|+\epsilon)^p.
\end{equation}
We have
\begin{eqnarray*}
\mbox{Minimize}_x P_{\epsilon}(x)& \geq & \mbox{Minimize}_x\
\sum_{i=1}^n |\sum_{j=1}^m x_{ij}-1|^q + \frac{1}{2}
\sum_{i=1}^n\sum_{j=1}^m (|x_{ij}|+\epsilon)^p\\
& = & \sum_{i=1}^n \mbox{Minimize}_x \ |\sum_{j=1}^m x_{ij}-1|^q +
\frac{1}{2} \sum_{j=1}^m (|x_{ij}|+\epsilon)^p\\ & = &
       n\cdot \mbox{Minimize}_{z}\ |1-z|^q+\frac{1}{2} (|z|+\epsilon)^p+
\frac{(m-1)}{2} {\epsilon}^p.
\end{eqnarray*}
The last equality comes from the submodularity of the function
$(x+\epsilon)^p$ and the fact that one can always choose only one
of $x_{ij}$ to be nonzero in each set such that the equality
holds. Consider function $g_{\epsilon}(z)=|1-z|^q+\frac{1}{2}
(|z|+\epsilon)^p$. Similar to Lemma \ref{lem:min-gx}, one can
prove that $g_{\epsilon}(z)$ has a unique minimizer in $[0,1]$.
Denote this minimum value by $c(p, q, \epsilon)$, we know that
$P_{\epsilon}(x)\geq n c(p, q, \epsilon)$. Then we can argue that
the 3-partition problem has a solution if and only if
$P_{\epsilon}(x)= n c(p, q, \epsilon)$. Therefore Theorem
\ref{thm:pp-q-np} holds.
\end{proof}

The above results reveal that finding a global minimizer for the
$L_q$-$L_p$ minimization problem is strongly NP-hard, or the original sparse
least squares problem is intrinsically hard, and no regularized optimization
models/methods could help much in the worst case. That is,
relaxing $L_0$ to $L_p$ for some $0<p < 1$ in the regularization
gains no significant advantage in terms of the (worst-case) computational complexity.

\section{Bounds $\beta(k)$ and $\gamma(k)$ for asymptotic properties}
Given the strong negative result for computing a global minimizer,
our hope now is to find a local minimizer of problem (\ref{pp}),
still good enough for the desired sparsity -- say no more than $k$
nonzero entries. This is indeed guaranteed by Theorem \ref{Th2} if
one chooses $\lambda\ge \gamma(k)$ of (\ref{gamma}), instead of
$\lambda\ge \beta(k)$ of (\ref{alpha-beta}). In the following, we
present a positive result in the bridge estimator model considered
by \cite{Fan01,Huang,knight}.

Consider asymptotic properties of the $L_2$-$L_p$ minimization
(\ref{pp}) where the sample size $m$ tends to infinity in the
model of \cite{Fan01,Huang,knight}. Suppose that the true
estimator $x^*$ has no more than $k$ nonzero entries. One expects
that there is a sequence of bridge estimators, i.e. solutions
$x^*_m$ of
$$\mbox{Minimize} \, \|Ax-b\|^2+\lambda_m\|x\|^p_p$$
such that  dist(support$\{x_m^*\}$, support$\{x^*\})\to 0$, as $m\to
\infty,$ with probability 1.

In applications of variable selection, the design matrix is typically standardized so that
$$ \|a_i\|^2=m \quad {\rm for} \quad i=1,\ldots, n.$$
Moreover, the smallest and largest eigenvalues $\rho_1$ and
$\rho_2$ of the covariate matrix $ \sum_m= \frac{1}{m} A^TA $
satisfy $0<c_1\le \rho_1\le \rho_2<c_2 $ for some constants $c_1$
and $c_2$, see \cite{Huang}. This assumption implies that
$\sqrt{c_1m}\le \|A\|\le \sqrt{c_2m}$. For simplicity, let us fix
$\|A\|=\sqrt{m}$ and $p=1/2$. Then we have
\[\beta(k)=k^{-3/4}(8m)^{1/4}\|b\|^{3/2}\quad\mbox{and}\quad \gamma(k)=k^{-1/2}(16m)^{1/4}\|b\|^{3/2}.\]
One can see that $\gamma(k)>\beta(k)$ for all $k\ge 1$.

If $k$ is a constant, we see that $\beta(k)$ and $\gamma(k)$ are
in the same order of $m$ and $\|b\|$. Thus, finding any local
minimizer of problem (\ref{pp}) in the objective level set
$f_p(0)$ is sufficient to guarantee desired sparsity when
$\lambda_m = \beta(k)$. That is, there is no significant
guaranteed sparsity difference between global and local minimizers
of problem (\ref{pp}). This seems also observed in computational
experiments when the true estimator is extremely sparse. Of
course, when $k$ increases as $m\to \infty$, a global minimizer of
problem (\ref{pp}) would likely become sparser than its local
minimizer, since $\beta(k)/\gamma(k)=O(k^{-1/4})$.

In general, both  $\beta(k)$ and $\gamma(k)$ meet the conditions
in the analysis of consistency and oracle efficiency of bridge estimators of \cite{Huang,knight}.
In their model, the parameter $\lambda_m$ is
required to satisfy certain conditions. For instances,
\begin{equation}\label{stand}
(\cite[Theorem \, 3]{knight}) \quad \quad \quad \lambda_m m^{-p/2}
\quad \to \quad \lambda_0 \ge 0 \quad{\rm as} \quad m \to \infty
\end{equation}
\begin{equation}\label{stand2}
(\cite[A3, (a)]{Huang}) \quad \quad \quad
 \lambda_m m^{-1/2} \quad \to \quad 0 \quad{\rm as}
\quad m \to \infty.
\end{equation}

With $\|a_i\|^2=m$ for $i=1,\ldots, n$ and $\|A\|=\sqrt{m}$ in their model, we have
$$\beta(k)m^{-p/2}=k^{p/2-1}\left(\frac{2}{p(1-p)}\right)^{p/2}\|b\|^{2-p}
\quad \to \quad \lambda_0 \ge 0 \quad{\rm as} \quad m \to \infty$$
and
$$\beta(k)m^{-1/2}=k^{p/2-1}\left(\frac{2}{p(1-p)}\right)^{p/2}\|b\|^{2-p}m^{(p-1)/2}
\quad \to \quad 0 \quad {\rm as} \quad m\to \infty.$$
For $\gamma(k)$, we have
$$\gamma(k)m^{-p/2}=
k^{p-1}\left(\frac{2}{p}\right)^p\|b\|^{2-p} \quad
\to \quad \lambda_0 \ge 0 \quad{\rm as} \quad m \to \infty$$
and
$$\gamma(k)m^{-1/2}=
k^{p-1}\left(\frac{2}{p}\right)^p\|b\|^{2-p}m^{(p-1)/2} \quad
\to \quad  0 \quad{\rm as} \quad m \to \infty.$$

Hence, both $\lambda_m=\beta(k)$ and $\lambda_m=\gamma(k)$ satisfy
(\ref{stand}) and (\ref{stand2}). Moreover, by Theorem \ref{Th3.1}
and Theorem \ref{Th2}, any minimizer of $L_2$-$L_p$ problem
(\ref{pp}) with $\lambda=\lambda_m$ is likely to have less than $k$
nonzero entries. Hence each of them could be a good choice for
consistency and oracle efficiency of bridge estimators via solving
the unconstrained $L_2$-$L_p$ minimization problem (\ref{pp}).

\end{document}